\documentclass[conference,10pt]{IEEEtran}

\IEEEoverridecommandlockouts
% The preceding line is only needed to identify funding in the first footnote. If that is unneeded, please comment it out.

%\usepackage{geometry}
%\geometry{a4paper,top=0.7cm}
%\geometry{a4paper, top=1.5cm, left=2.5cm, right=3.0cm, bottom=2.0cm, includehead, includefoot}
\usepackage{cite}
\usepackage{amsmath,amssymb,amsfonts}
\usepackage{algorithmic}
\usepackage{algorithm}
\usepackage{graphicx}
\usepackage{textcomp}
\usepackage{bm}
\usepackage{amsmath, amsfonts, amssymb, amsbsy,nccmath} \usepackage{amsthm}
\newtheorem{theorem}{Theorem}

\newcommand{\bmA}{\mathbf A}
\newcommand{\bma}{\mathbf a}
\newcommand{\bmI}{\mathbf I}
\newcommand{\bmT}{\mathbf T}

\newcommand{\bmB}{\mathbf B}
\newcommand{\bmC}{\mathbf C}
\newcommand{\bmD}{\mathbf D}
\newcommand{\bmP}{\mathbf P}
\newcommand{\bmG}{\mathbf G}
\newcommand{\bmH}{\mathbf H}
\newcommand{\bmF}{\mathbf F}
\newcommand{\bmS}{\mathbf S}
\newcommand{\bms}{\mathbf s}
\newcommand{\bmV}{\mathbf V}
\newcommand{\bmy}{\mathbf y}
\newcommand{\bmR}{\mathbf R}
\newcommand{\bmX}{\mathbf X}
\newcommand{\bmx}{\mathbf x}
\newcommand{\bmn}{\mathbf n}
\newcommand{\bmSigma}{\mathbf \Sigma}

\usepackage{xcolor}
\def\BibTeX{{\rm B\kern-.05em{\sc i\kern-.025em b}\kern-.08em
    T\kern-.1667em\lower.7ex\hbox{E}\kern-.125emX}}
\begin{document}

\title{Hybrid Beamforming and Combining for Millimeter Wave Full Duplex Massive MIMO Interference Channel
}

\author{\IEEEauthorblockN{Chandan Kumar Sheemar and Dirk Slock}
\IEEEauthorblockA{\textit{Communications Systems Department, EURECOM, France}}
\IEEEauthorblockA{\textit{ email: \{sheemar,slock\}@eurecom.fr}}}

\maketitle

\begin{abstract}
Full Duplex (FD) communication can revolutionize wireless communications as it avoids using independent channels for bi-directional communications. This work generalizes the point-to-point FD communication in millimeter wave (mmWave) band consisting of K-pairs of massive MIMO FD nodes operating simultaneously. We present a novel joint hybrid beamforming (HYBF) and combining scheme for weighted sum-rate (WSR) maximization to enable the coexistence of massive MIMO FD links cost-efficiently. The proposed algorithm relies on alternative optimization based on the minorization-maximization method. Moreover, we present a novel SI and massive MIMO interference channel aware power allocation scheme to include the optimal power control. Simulation results show significant performance improvement compared to a traditional bidirectional fully digital half-duplex (HD) system.
\end{abstract}

\begin{IEEEkeywords}
Hybrid beamforming, Full Duplex, Millimeter Wave, massive MIMO Interference channel 
\end{IEEEkeywords}

Full duplex (FD) can double the spectral efficiency of a wireless communication system as it allows simultaneous transmission and reception in the same frequency band. It avoids using two independent channels for bi-directional communication by allowing more flexibility in spectrum utilization, improving data security, and reducing the air interface latency and delay issues \cite{rosson2019towards,sheemar2020receiver,sheemar2021PAPC}. Self-Interference (SI) is a major challenge to deal with to achieve an ideal FD operation, which could be around $110$~dB compared to the received signal of interest.

However, continuous advancement in the SI cancellation (SIC) techniques has made the FD operation feasible. The research towards FD communication in the mmWave band (from $30$ to $300$ GHz) has recently started, which offers much wider bandwidths and results to be a  vital resource for future wireless communications. However, shifting the potential of FD to the mmWave is much more challenging, as the signal can suffer from shadowing effects, higher Doppler spreads,  rapid channel fluctuations, intermittent connectivity and low signal-to-noise ratio (SNR).  FD MIMO nodes need to be equipped with a massive number of antennas  \cite{satyanarayana2018hybrid,rosson2019towards} to overcome the propagation challenges. Therefore, to enable FD communication in mmWave cost-efficiently, we must rely on efficient hybrid beamforming (HYBF) designs, consisting of large dimensional analog processing and low dimensional digital processing.

In \cite{han2019full}, HYBF for an FD relay assisted mmWave macro-cell scenario is investigated. In \cite{satyanarayana2018hybrid}, the authors proposed a novel HYBF and combining for a point-to-point FD communication. In \cite{huang2020learning}, a machine-learning-based HYBF for a one-way mmWave FD relay is investigated.  In \cite{sheemar2021hybrid}, transmit beamforming for two massive MIMO nodes is presented. In \cite{thomas2019multi}, HYBF for a bidirectional point-to-point OFDM FD system is available. In \cite{cai2020two}, a novel HYBF design for an FD mmWave MIMO relay is proposed. In \cite{roberts2020hybrid}, a novel HYBF design with one hybrid uplink and one hybrid downlink multi-antenna half-duplex (HD) user is proposed. In \cite{sheemar2021practical}, HYBF and combining for a multi-user MIMO uplink and downlink users is proposed. In \cite{sheemar2021massive}, a HYBF for FD integrated access and backhaul is proposed. Note that the contributions on the point-to-point mmWave massive MIMO FD are limited only to a single communication link \cite{satyanarayana2018hybrid,huang2020learning,sheemar2021hybrid,thomas2019multi}.

In this paper, we generalize the point-to-point FD communication to a $K$-pairs of massive MIMO nodes in mmWave. All the nodes are assumed to be equipped with a massive number of antennas and only a limited number of radio-frequency (RF) chains. The coexistence of multiple FD nodes in mmWave leads to an FD massive MIMO interference channel in which each node suffers from SI and interference from all the other nodes. To enable FD communication in such a challenging scenario, we present a novel HYBF and combining design based on minorization-maximization \cite{stoica2004cyclic} for weighted sum-rate (WSR) maximization. Moreover, we present an optimal power allocation scheme for FD massive MIMO nodes, which are both SI and massive MIMO interference channel aware.
Simulation results show that the proposed HYBF and combining design exhibit significant performance improvement over a fully digital massive MIMO HD communication system with only a limited number of RF chains.

In summary the contributions of our work are
\begin{enumerate}
    \item Generalization of the point-to-point massive MIMO FD communication in mmWave to $K$-pair links.
    \item Introduction of a WSR maximization problem over a massive MIMO interference channel for HYBF and combining.
    \item Optimal SI and massive MIMO interference channel aware power allocation scheme. 
\end{enumerate}

\section{SYSTEM MODEL} 
In this paper, we consider a setup of $K$-pair mmWave massive MIMO FD point-to-point communication links as shown in Figure \ref{figurella_1}.
\begin{figure}[!h]
    \centering
    \includegraphics[width=8cm,height=6cm]{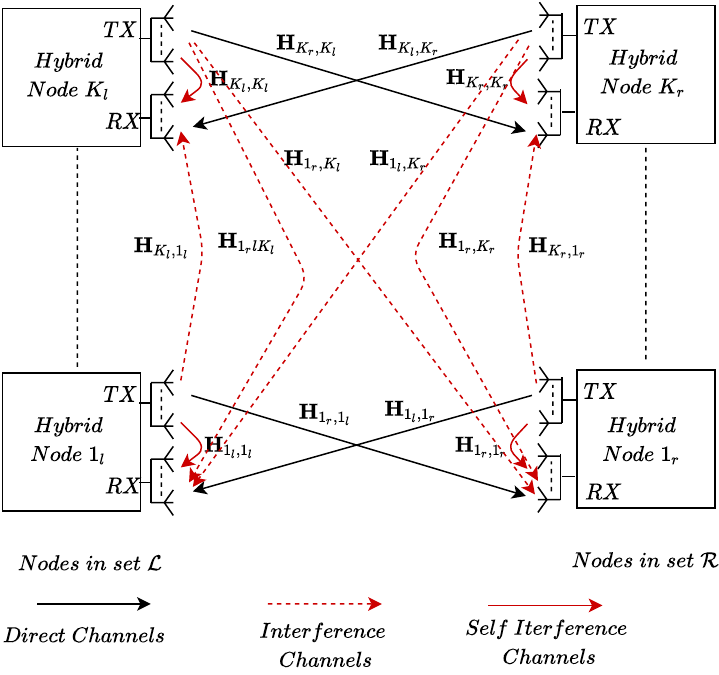}
    \caption{Massive MIMO interference channel in the mmWave consisting of Hybrid nodes with separate transmit (TX) and receive (RX) array.}
    \label{figurella_1}
\end{figure}
We distinguish the total nodes into two sets, left and right, denoted with $\mathcal{L}=\{1_l,...,K_l\}\footnote{\textbf{Notation:} Boldface lower and upper case characters denote vectors and matrices, respectively, and $\mathbb{E}\{\cdot\}, \mbox{Tr}\{\cdot\}, (\cdot)^H,\bmI$, and $\bmD_{d}$ represent expectation, trace, Hermitian transpose, identity matrix and the $d$ dominant vector selection matrix, respectively. The  operators $\mbox{vec}(\bmX),\mbox{unvec}(\bmx)$ stacks the column of $\bmX$ into a vector $\bmx$, stacks back the vector $\bmx$ into a matrix $\bmX$ and $\angle \bmX$ normalize the amplitude to be unit-modulus.
}.$ and $\mathcal{R}=\{1_r,...,K_r\}$, respectively. Nodes involved in the communication link $1\leq i \leq K$ are denoted with  $i_a$, in which the subscript $a \in \mathcal{L}$ or $\in \mathcal{R}$.
We consider a multi-stream approach and let $\bms_{i_l} \in \mathbb{C}^{d_{i_l} \times 1}$ denote the white and unitary variance data-streams transmitted from node $i_l \in \mathcal{L}$ intended for node $i_r \in \mathcal{R}$. Let $\bmV_{i_l}  \in \mathbb{C}^{M_{i_l}^t \times d_{i_l}} $ and $\bmG_{i_l} \in \mathbb{C}^{N_{i_l}^t \times M_{i_l}^t}$ denote the digital and the analog beamformer at node $i_l \in \mathcal{L}$, respectively. Let  $\bmF_{i_l} \in \mathbb{C}^{ M_{i_l}^r \times N_{i_l}^r} $ denote the analog combiner at node $i_l\in \mathcal{L}$ for the data streams $\bms_{i_r}$ transmitted from node $i_r \in \mathcal{R}$. Let $M_{i_l}^t$ and $M_{i_l}^r$ denote the number of transmit and receive RF chains for node ${i_l} \in \mathcal{L}$, respectively. Let $N_{i_l}^t$ and $N_{i_l}^r$ denote the total number of transmit and receive antennas for node $i_l \in \mathcal{L}$, respectively. The signal received  at node $i_l \in \mathcal{L}$ for the case of massive MIMO FD interference channel can be written as
\begin{equation}
\begin{aligned}
  &\bmy_{i_l} = \bmF_{i_l} \bmH_{{i_l},{i_r}} \bmG_{i_r} \bmV_{i_r} \bms_{i_r} +    \bmF_{i_l} \bmH_{{i_l},{i_l}} \bmG_{i_l} \bmV_{i_l} \bms_{i_l}  +  \bmF_{i_l} \bmn_{i_l} + \\ &   \hspace{-1mm} \sum_{\substack{m_l \in \mathcal{L}\\ m_l \neq i_l}} \hspace{-1mm}  \bmF_{i_l} \bmH_{i_l,m_l} \bmG_{m_l} \bmV_{m_l} \bms_{m_l} + \hspace{-1mm} \sum_{\substack{m_r \in \mathcal{R}\\ m_r \neq i_r}} \hspace{-1mm}  \bmF_{i_l} \bmH_{i_l,m_r} \bmG_{m_r} \bmV_{m_r} \bms_{m_r}
\end{aligned}
\end{equation}
where $\bmn_{i_l} \sim \mathcal{CN}(0,\sigma_{i_l}^2 \bmI)$ denote the noise vector at node $i$, $\bmH_{i_l,i_r} \in \mathbb{C}^{N_{i_l}^r \times N_{i_r}^t}$, and  $\bmH_{i_l,i_l} \in \mathbb{C}^{N_{i_l}^r \times N_{i_l}^t}$ denote the channel between the node $i_l \in \mathcal{L}$ and $i_r \in \mathcal{R}$, and the SI channel for node $i_l$, respectively. The matrices $\bmH_{i_l,m_l} \in \mathbb{C}^{N_{i_l}^r \times N_{m_l}^t}$ and $\bmH_{i_l,m_r} \in \mathbb{C}^{N_{i_l}^r \times N_{m_r}^t}$ denote the interference channels from node $m_l \in \mathcal{L}, m_l \neq i_l$ and from node $m_r \in \mathcal{R}, m_r \neq i_r$, respectively.
Let $\bmT_{i_l,i_r} = \bmG_{i_r} \bmV_{i_r} \bmV_{i_r}^H \bm{G}_{i_r}^H$ denote the transmit covariance matrix from node $i_r \in \mathcal{R}$ intended for node $i_l \in \mathcal{L}$. Let ($\bmR_{i_l}$) and $\bmR_{\overline{i_l}}$ denote the (signal plus) interference and noise covariance matrix received at node $i_l \in \mathcal{L}$. The covariance matrix  $\bmR_{i_l}$ can be written as
\begin{equation} \label{covariance}
    \begin{aligned}
         \bmR_{i_l}  &=  \underbrace{\bmF_{i_l} \bmH_{{i_l},{i_r}} \bmT_{i_l,i_r}  \bmH_{{i_l},{i_r}}^H  \bmF_{i_l}^H  }_{\triangleq \bmS_{i_l} } +   \bmF_{i_l} \bmH_{{i_l},{i_l}} \bmT_{i_r,i_l}
         \bmH_{{i_l},{i_l}}^H \bmF_{i_l}^H
   \\&  +  \sigma_{i_l}^2 \bmF_{i_l} \bmF_{i_l}^H  + \hspace{-3mm}    \sum_{\substack{m_l \in \mathcal{L}, m_l \neq i_l}}\hspace{-3mm}  \bmF_{i_l} \bmH_{i_l,m_l} \bmT_{m_r,m_l}   \bmH_{i_l,m_l}^H \bmF_{i_l}^H
   \\&+\hspace{-3mm} \sum_{\substack{m_r \in \mathcal{R}, m_r \neq i_r}} \hspace{-3mm} \bmF_{i_l} \bmH_{i_l,m_r} \bmT_{m_l,m_r} \bmH_{i_l,m_r}^H \bmF_{i_l}^H
    \end{aligned}
\end{equation} 
and $\bmR_{\overline{i_l}}$ can be written as $\bmR_{\overline{i_l}} = \bmR_{i_l} - \bmS_{i_l}$,
where $\bmS_{i_l}$ denotes the useful signal part for node $i_l \in \mathcal{L}$. In the mmWave, the channel $\bmH_{i_l,i_r}$ can be modelled as 

\begin{equation} 
   \bmH_{i_l,i_r} = \sqrt{\frac{N_{i_l}^r N_{i_r}^t}{N_c N_{p}}} \sum_{n_c = 1}^{N_c} \sum_{n_p = 1}^{N_p} \alpha_{i_l,i_r}^{n_p,n_c} \bma_{i_l}(\phi_{i_l}^{n_p,n_c}) \bma_{i_r}^T(\theta_{i_r}^{n_p,n_c}), \label{channel_model}
\end{equation} 
where $N_c$ and $N_{p}$ denote the number of clusters and number of rays, respectively, $\alpha_{i_l,i_r}^{n_p,n_c} \sim \mathcal{CN}(0,1)$ is a complex Gaussian random variable with amplitudes and phases distributed according to the Rayleigh and uniform distribution, respectively, and $\bma_{i_l}(\phi_{i_l}^{n_p,n_c})$ and  $\bma_{i_r}^T(\theta_{i_r}^{n_p,n_c})$ denote the receive and transmit antenna array response with angle of arrival (AoA) $\phi_{i_l}^{n_p,n_c}$ and angle of departure (AoD)   $\theta_{i_r}^{n_p,n_c}$, respectively. Also, the channel matrices $\bmH_{i_l,m_r}$ and $\bmH_{i_l,m_l}$ can be modelled as \eqref{channel_model}. The SI channel can be modelled with the Rician fading channel model given as
\begin{equation} \label{SI_Channel}
    \bmH_{i_l,i_l} = \sqrt{\frac{\kappa_{i_l}}{\kappa_{i_l}+1}} \bmH_{i_l}^{LoS} + \sqrt{\frac{1}{\kappa_{i_l}+1}} \bmH_{i_l}^{Ref},
\end{equation}
where $\kappa_{i_l}$, $\bmH_{i_l}^{LoS}$ and $\bmH_{i_l}^{Ref}$ denote the Rician factor, the line-of-sight (LoS) component matrix and the matrix of reflected components, respectively, at node $i_l \in\mathcal{L}$. The channel matrix $\bmH_{i_l}^{Ref}$ can be modelled as in \eqref{channel_model}. The elements of matrix $\bmH_{i_l}^{LoS}$ at $m$-th row and $n$-th column
can be modelled as
\begin{equation} \label{SI_LOS_model}
    \bmH_{i_l}^{LoS}(m,n) = \frac{\rho}{r_{m,n}} e^{-j 2 \pi \frac{r_{m,n}}{\lambda}}
\end{equation}
 where $\rho$ is the power normalization constant which assure that $\mathbb{E}(||\bmH_{i_l}^{LoS}(m,n)||_F^2)= N_{i_l}^r N_{i_l}^t$ and $r_{m,n}$ depends on the antenna array geometry. The WSR maximization problem for a massive MIMO interference channel under the total sum-power constraint and the unit-modulus phase shifters over the mmWave FD massive MIMO interference channel can be stated as
\begin{subequations}\label{WSR}
\begin{equation}
     \underset{\substack{\bmG,\bmV,\\ \bmF}}{\text{max}} \quad \sum_{i_l \in \mathcal{L}} w_{i_l} \mbox{ln det}(\bmR_{\overline{{i_l}}}^{-1} \bmR_{{i_l}}) + \sum_{{i_r} \in \mathcal{R}} w_{i_r} \mbox{ln det}(\bmR_{\overline{{i_r}}}^{-1} \bmR_{{i_r}}).
\end{equation}
\begin{equation}
\text{s.t.} \quad \mbox{Tr}( \bmG_a \bmV_a \bmV_a^H  \bmG_a^H ) 	\preceq p_a, \quad \forall a \in \mathcal{L} \;\mbox{or}\; \in \mathcal{R} , \label{c1}
\end{equation}
\begin{equation}
 \quad  \quad \quad |\bmG_a(m,n)|^2 = 1, \quad \forall \; m,n,  \quad \forall a \in \mathcal{L} \;\mbox{or}\; \in \mathcal{R} , \label{c2}
\end{equation} 
\begin{equation}
 \quad  \quad \quad |\bmF_a(m,n)|^2 = 1, \quad \forall \; m,n,  \quad \forall a \in \mathcal{L} \;\mbox{or}\; \in \mathcal{R}. \label{c3}
\end{equation} 
\end{subequations}
The scalars $w_{i_l}$ and $w_{i_r}$ denote the rate weights for node $i_l \in \mathcal{L}$ and $i_r \in \mathcal{R}$, respectively, and $p_a$
denote the total sum-power constraint for node $a \in \mathcal{L} \;\mbox{or}\; \in \mathcal{R}$. The constraints $\eqref{c2}$ and $\eqref{c3}$ denote the unit-modulus phase-shifter constraint on node $a$ for the analog beamformer and combiner, respectively. In the problem statement, $\bmG,\bmV,$ and $\bmF $, denote the collection of the analog and digital beamformers and the analog combiners, respectively.

\emph{Remark-} Note that \eqref{WSR}, stated as $\mbox{ln det}( \cdot )$, is not affected by the digital combiners. They can be chosen to be the well known MMSE combiners and the achieved rate would not be affected (Please see (4)-(9) \cite{christensen2008weighted}).

\section{Problem Simplification}
 The problem \eqref{WSR} is non-concave in terms of the covariance matrices for the weighted rates of nodes for which the covariance matrices generate interference. Given a non-concave structure of the problem due to interference, it makes the searching for global optima extremely challenging. 
 
 To render a feasible solution, we leverage the minorization-maximization \cite{stoica2004cyclic} approach. Note that the WSR can be written as 
\begin{equation}
    WSR = WSR_L +  WSR_R = \sum_{i_l \in \mathcal{L}} WR_{i_l} + \sum_{i_r \in \mathcal{R}} WR_{i_r} 
\end{equation}
 where $WR$ denotes the weighted-rate of one particular user, denoted with the subscript. Note that $WR_{i_l}$ is only concave in $\bmT_{i_l,i_r}$ (point-to-point link between nodes $i_l$ and $i_r$) and the remaining WRs are non-concave in $\bmT_{i_l,i_r}$.
  Let $\overline{i_l}$( $\overline{i_r}$) denote the summation in set $\mathcal{L}$ ( $\mathcal{R}$) with the element $i_l$ ( $i_r$).
Since a linear function is simultaneously convex and concave, difference of convex (DC) programming introduces the first order
Taylor series expansion of $WSR_{\overline{i_l}}$ and $WSR_{R}$ in $\bmT_{i_l,i_r}$, around $\hat{\bmT}_{i_l,i_r}$, (i.e. all $\bmT_{i_l,i_r}$). Let $\hat{\bmT}$ denote the set containing all such $\hat{\bmT}_{i_l,i_r}$. The linearized tangent expression for the FD point-to-point link for the transmit covariance matrices of node $i_l \in \mathcal{L}$ and $i_r \in \mathcal{L}$ can be written by computing the following gradients 
\begin{equation}
    \hat{\bmA}_{i_r} = - \frac{\partial \mbox{WSR}_{\overline{i_l}}}{\partial \hat{\bmT}_{i_l,i_r}}\Bigr|_{\hat{\bmT}}
    ,\quad   \hat{\bmB}_{i_r}  = - \frac{\partial \mbox{WSR}_R}{\partial \hat{\bmT}_{i_l,i_r}}\Bigr|_{\hat{\bmT}}, 
    \label{grad_UL1}
\end{equation}
 \begin{equation}
    \hat{\bmA}_{i_l} = - \frac{\partial \mbox{WSR}_{\overline{i_r}}}{\partial \hat{\bmT}_{i_r,i_l}}\Bigr|_{\hat{\bmT}}
    ,\quad   \hat{\bmB}_{i_l}  = - \frac{\partial \mbox{WSR}_L}{\partial \hat{\bmT}_{i_r,i_l}}\Bigr|_{\hat{\bmT}}. 
    \label{grad_UL2}
\end{equation}
 where $\hat{\bmA}_{a}$ and $\hat{\bmB}_{a}$, for $a \;\in\; \mathcal{L} \;\mbox{or} \;\mathcal{R}$, denote the linearization with respect to the same set and the other set, respectively. The gradients can be computed by using the matrix differentiation properties, which leads to the expressions 
 \begin{subequations} \label{grad}
  \begin{equation} \label{A_il}
      \hat{\bmA}_{i_r} = \hspace{-3mm} \sum_{\substack{m_l \in \mathcal{L}, m_l \neq i_l}} \hspace{-3mm}\bmH_{m_l,i_r}^H \bmF_{m_l}^H (\bmR_{\overline{m_l}}^{-1}) - \bmR_{m_l}^{-1}))\bmF_{m_l} \bmH_{m_l,i_r}
 \end{equation}
  \begin{equation} \label{B_il}
      \hat{\bmB}_{i_r} =  \sum_{\substack{n_r \in \mathcal{R}}} \bmH_{n_r,i_r}^H \bmF_{n_r}^H (\bmR_{\overline{n_r}}^{-1}) - \bmR_{n_r}^{-1}))\bmF_{n_r}\bmH_{n_r,i_r},
 \end{equation}   
  \begin{equation} \label{A_ir}
   \hat{\bmA}_{i_l} = \hspace{-3mm} \sum_{\substack{m_r \in \mathcal{R}, m_r \neq i_r}} \hspace{-3mm} \bmH_{m_r,i_l}^H \bmF_{m_r}^H (\bmR_{\overline{m_r}}^{-1}) - \bmR_{m_r}^{-1}))\bmF_{m_r} \bmH_{m_r,i_r},
 \end{equation}
  \begin{equation} \label{B_ir}
    \hat{\bmB}_{i_l} = \sum_{\substack{n_l \in \mathcal{L}}} \bmH_{n_l,i_l}^H \bmF_{n_l}^H (\bmR_{\overline{n_l}}^{-1}) - \bmR_{n_l}^{-1}))\bmF_{n_l}\bmH_{n_l,i_l}.
 \end{equation}
 \end{subequations}
Note that the rate of node $i_r$  depends on the transmit covariance matrix from node $i_l$  and vice-versa, and the gradients $ \hat{\bmB}_{i_r}$ and $\hat{\bmB}_{i_l}$ take into account also the SI generated at the node $i_l$ and  $i_r$, respectively. Let $\lambda_{i_l}$ and $\lambda_{i_r}$ denote the  Lagrange multipliers associated with the sum-power constraint for the node $i_l$ and $i_r$, respectively. Considering the unconstrained analog part, dropping the constant terms and re-parametrizing back the transmit covariance matrices as a function of the digital and the analog beamformers, augmenting the WSR cost function with sum-power constraint, leads to the Lagrangian  \eqref{WSR_concave}, given at the top of the next page. Note that \eqref{WSR_concave} does not contain the unit-modulus constraints, which will be incorporated later.
\begin{figure*}[t]
\begin{equation} \label{WSR_concave}
\begin{aligned}
       \mathbb{L}= & \sum_{i_l \in \mathcal{L}} \lambda_{i_l} p_{i_l} +  \sum_{i_r \in \mathcal{R}} \lambda_{i_r} p_{i_r} +  \sum_{i_l \in \mathcal{L}} w_{i_l} \mbox{ln det} (\bmI + \bmV_{i_r}^H \bmG_{i_r}^H \bmH_{i_l,i_r}^H \bmF_{i_l}^H \bmR_{\overline{i_l}}^{-1}\bmF_{i_l} \bmH_{i_l,i_r}  \bmG_{i_r}  \bmV_{i_r})  - \mbox{Tr}( \bmV_{i_r}^H  \bmG_{i_r}^H (\hat{\bmA}_{i_r} + \hat{\bmB}_{i_r} + \lambda_{i_r} \bmI) \\& \bmG_{i_r}  \bmV_{i_r})  + \sum_{i_r \in \mathcal{R}} w_{i_r} \mbox{ln det} (\bmI + \bmV_{i_l}^H \bmG_{i_l}^H \bmH_{i_r,i_l}^H \bmF_{i_r}^H \bmR_{\overline{i_r}}^{-1}\bmF_{i_r} \bmH_{i_r,i_l} \bmG_{i_l}  \bmV_{i_l}) 
           - \mbox{Tr}( \bmV_{i_l}^H \bmG_{i_l}^H (\hat{\bmA}_{i_l} + \hat{\bmB}_{i_l} + \lambda_{i_l} \bmI) \bmG_{i_l}  \bmV_{i_l} ),
\end{aligned} \vspace{-2mm}
\end{equation}\hrulefill
\end{figure*}

\section{Hybrid Beamforming and Combining} 
This sections presents a novel HYBF and combining design based on the problem simplification in \eqref{WSR_concave}.

\subsection{Digital beamformers}
 To optimize the digital for node $i_l \in \mathcal{L}$ and $i_r \in \mathcal{R}$, we take the derivative of the \eqref{WSR_concave}  with respect to $\bmV_{i_r}$ and $\bmV_{i_l}$, respectively, which leads to the following Karush–Kuhn–Tucker (KKT) conditions
 \begin{subequations}
 \begin{equation}
\begin{aligned}
 &\bmG_{i_r}^H \bmH_{i_l,i_r}^H \bmF_{i_l}^H \bmR_{\overline{i_l}}^{-1}\bmF_{i_l} \bmH_{i_l,i_r} \bmG_{i_r}  \bmV_{i_r} \big(\bmI + \bmV_{i_r}^H \bmG_{i_r}^H \bmH_{i_l,i_r}^H \bmF_{i_l}^H \\& \bmR_{\overline{i_l}}^{-1}\bmF_{i_l} \bmH_{i_l,i_r}  \bmG_{i_r}  \bmV_{i_r}\big)^{-1}   =   \bmG_{i_r}^H \big( \hat{\bmA}_{i_r} + \hat{\bmB}_{i_r}  +  \lambda_{i_r} \bmI\big)\bmG_{i_r} \bmV_{i_r}, 
\end{aligned} \label{grad_precoder_i} 
\end{equation} 
 \begin{equation}
\begin{aligned}
&\bmG_{i_l}^H \bmH_{i_r,i_l}^H \bmF_{i_r}^H \bmR_{\overline{i_r}}^{-1}\bmF_{i_r} \bmH_{i_r,i_l}  \bmG_{i_l}  \bmV_{i_l}\big(\bmI + \bmV_{i_l}^H \bmG_{i_l}^H \bmH_{i_r,i_l}^H \bmF_{i_r}^H \\& \bmR_{\overline{i_r}}^{-1}\bmF_{i_r} \bmH_{i_r,i_l}  \bmG_{i_l}  \bmV_{i_l}\big)^{-1}   =   \bmG_{i_l}^H \big( \hat{\bmA}_{i_l} + \hat{\bmB}_{i_l}  +  \lambda_{i_l} \bmI\big)\bmG_{i_l} \bmV_{i_l}, 
\end{aligned} \label{grad_precoder_j} 
\end{equation} 
\end{subequations}

\begin{theorem} \label{teorema1}
The digital beamformers $\bmV_{i_l}$ and $\bmV_{i_r}$ can be optimized as a generalized dominant eigenvectors of the pairs 
\begin{equation}
\begin{aligned} \label{digital_Vir}
          \bmV_{i_r} = \bmD_{d_{i_l}}(&\bmG_{i_r}^H \bmH_{i_l,i_r}^H \bmF_{i_l}^H \bmR_{\overline{i_l}}^{-1}\bmF_{i_l} \bmH_{i_l,i_r} \bmG_{i_r} ,  \\&\bmG_{i_r}^H \big( \hat{\bmA}_{i_r} + \hat{\bmB}_{i_r}  +  \lambda_r \bmI\big)\bmG_{i_r}) ,
\end{aligned}
\end{equation}
\begin{equation} \label{digital_Vil}
\begin{aligned}
          \bmV_{i_l} = \bmD_{d_{i_r}}(&\bmG_{i_l}^H \bmH_{i_r,i_l}^H \bmF_{i_r}^H \bmR_{\overline{i_r}}^{-1}\bmF_{i_r} \bmH_{i_r,i_l} \bmG_{i_l} ,  \\&\bmG_{i_l}^H \big( \hat{\bmA}_{i_l} + \hat{\bmB}_{i_l}  +  \lambda_{i_l} \bmI\big)\bmG_{i_l}) ,
\end{aligned}
\end{equation}
 where $\bmD_{d_{i_l}}$ $(\bmD_{d_{i_r}})$ selects the $d_{i_l}$ $(d_{i_r})$ dominant generalized eigenvectors. 
\end{theorem}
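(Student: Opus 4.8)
The plan is to read \eqref{grad_precoder_i}--\eqref{grad_precoder_j} as the stationarity conditions of the blocks of the surrogate \eqref{WSR_concave} that involve $\bmV_{i_r}$ and $\bmV_{i_l}$, and to show that each such block is maximized by a matrix of generalized dominant eigenvectors; I would treat the $\bmV_{i_r}$-block, the other being identical after exchanging $i_l\leftrightarrow i_r$ and $\mathcal L\leftrightarrow\mathcal R$. Collecting from \eqref{WSR_concave} the terms containing $\bmV_{i_r}$, and writing $d$ for its number of columns, the block is
\begin{equation*}
f(\bmV_{i_r})=w_{i_l}\,\mbox{ln det}\big(\bmI+\bmV_{i_r}^H\bmC_{i_r}\bmV_{i_r}\big)-\mbox{Tr}\big(\bmV_{i_r}^H\bmQ_{i_r}\bmV_{i_r}\big),
\end{equation*}
where $\bmC_{i_r}\triangleq\bmG_{i_r}^H\bmH_{i_l,i_r}^H\bmF_{i_l}^H\bmR_{\overline{i_l}}^{-1}\bmF_{i_l}\bmH_{i_l,i_r}\bmG_{i_r}\succeq\mathbf{0}$ and $\bmQ_{i_r}\triangleq\bmG_{i_r}^H(\hat{\bmA}_{i_r}+\hat{\bmB}_{i_r}+\lambda_{i_r}\bmI)\bmG_{i_r}$ are the two matrices of the pair in \eqref{digital_Vir}, and zeroing the gradient of $f$ yields the KKT condition \eqref{grad_precoder_i}. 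Since $\bmR_{m}=\bmR_{\overline{m}}+\bmS_{m}$ with $\bmS_{m}\succeq\mathbf{0}$ forces $\bmR_{\overline{m}}^{-1}\succeq\bmR_{m}^{-1}$, every summand in \eqref{A_il}--\eqref{B_ir} is positive semidefinite, so $\hat{\bmA}_{i_r}+\hat{\bmB}_{i_r}\succeq\mathbf{0}$; with $\lambda_{i_r}>0$ at an active power constraint and $\bmG_{i_r}$ of full column rank this gives $\bmQ_{i_r}\succ\mathbf{0}$. Because $\bmQ_{i_r}\succ\mathbf{0}$, the trace penalty in $f$ grows quadratically in $\bmV_{i_r}$ while the $\mbox{ln det}$ term grows only logarithmically, so $f$ is coercive and attains its maximum at a stationary point, i.e.\ at a solution of \eqref{grad_precoder_i}.

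Next I would whiten: set $\bar{\bmV}=\bmQ_{i_r}^{1/2}\bmV_{i_r}$ and $\tilde{\bmC}=\bmQ_{i_r}^{-1/2}\bmC_{i_r}\bmQ_{i_r}^{-1/2}=\mathbf{U}\bmSigma\mathbf{U}^H$ with eigenvalues $\sigma_1\ge\sigma_2\ge\cdots\ge0$, so that $f=w_{i_l}\mbox{ln det}(\bmI+\bar{\bmV}^H\tilde{\bmC}\bar{\bmV})-\mbox{Tr}(\bar{\bmV}^H\bar{\bmV})$ and \eqref{grad_precoder_i} becomes $w_{i_l}\tilde{\bmC}\bar{\bmV}=\bar{\bmV}(\bmI+\bar{\bmV}^H\tilde{\bmC}\bar{\bmV})$. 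This identity makes $\mathrm{range}(\bar{\bmV})$ a $\tilde{\bmC}$-invariant subspace, hence spanned by eigenvectors of the Hermitian $\tilde{\bmC}$, so $\bar{\bmV}=\mathbf{U}_{\mathcal S}\mathbf{W}$ for an index set $\mathcal S$ with $|\mathcal S|\le d$ and a matrix $\mathbf{W}$ of full row rank; substituting this back and using $\mathbf{U}_{\mathcal S}^H\tilde{\bmC}\mathbf{U}_{\mathcal S}=\bmSigma_{\mathcal S}$, a short manipulation forces $\mathbf{W}\mathbf{W}^H=w_{i_l}\bmI-\bmSigma_{\mathcal S}^{-1}$, which is diagonal. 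Hence at any stationary point $\bar{\bmV}=\mathbf{U}_{\mathcal S}\boldsymbol\Lambda$ with $\boldsymbol\Lambda=\mathrm{diag}\big(\sqrt{(w_{i_l}-\sigma_k^{-1})^{+}}\big)$ and $\mathcal S\subseteq\{k:\sigma_k>1/w_{i_l}\}$, and the attained value is $f=\sum_{k\in\mathcal S}\big(w_{i_l}\ln(w_{i_l}\sigma_k)-w_{i_l}+\sigma_k^{-1}\big)$, a sum of nonnegative terms each non-decreasing in $\sigma_k$.

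It follows that $f$ is maximized by choosing $\mathcal S$ to collect the $d$ largest eigenvalues $\sigma_k$. Undoing the whitening, $\bmV_{i_r}=\bmQ_{i_r}^{-1/2}\mathbf{U}_{\mathcal S}\boldsymbol\Lambda$; and since $\tilde{\bmC}\mathbf{u}_k=\sigma_k\mathbf{u}_k$ is equivalent to $\bmC_{i_r}\big(\bmQ_{i_r}^{-1/2}\mathbf{u}_k\big)=\sigma_k\,\bmQ_{i_r}\big(\bmQ_{i_r}^{-1/2}\mathbf{u}_k\big)$, the columns of $\bmV_{i_r}$ are, up to the diagonal power loading $\boldsymbol\Lambda$, the $d$ dominant generalized eigenvectors of the pair $(\bmC_{i_r},\bmQ_{i_r})$, which is exactly \eqref{digital_Vir}. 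The rate weight $w_{i_l}$ enters only the generalized eigenvalues and $\boldsymbol\Lambda$, not the eigenvectors themselves, which is why it does not appear in the pair, and $\boldsymbol\Lambda$ is precisely what the subsequent power-allocation scheme fixes. Repeating the argument verbatim on the $\bmV_{i_l}$-block of \eqref{WSR_concave} yields \eqref{digital_Vil}.

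The step I expect to be the main obstacle is the global-optimality claim: $f$ is concave in the transmit covariance $\bmV_{i_r}\bmV_{i_r}^H$, but the admissible covariances have rank bounded by the stream count, so concavity alone does not pin down the maximizer. One must combine the coercivity argument (to localize the maximum at a stationary point) with the stationary-point characterization above (to reduce the search to a choice of index set $\mathcal S$), and then invoke the monotonicity of the per-mode reward $w_{i_l}\ln(w_{i_l}\sigma_k)-w_{i_l}+\sigma_k^{-1}$ in $\sigma_k$ to conclude that the dominant index set wins. The remaining points are routine: the positive definiteness of $\bmQ_{i_r}$ (settled in the first paragraph) and the degenerate modes with $\sigma_k\le 1/w_{i_l}$, which receive zero power and merely lower the effective number of streams.
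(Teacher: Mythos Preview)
Your argument is correct and complete. Both you and the paper reduce the per-block problem to the whitened form $w_{i_l}\,\mbox{ln det}(\bmI+\bar{\bmV}^H\tilde{\bmC}\bar{\bmV})-\mbox{Tr}(\bar{\bmV}^H\bar{\bmV})$ via a congruence with $\bmQ_{i_r}$; the paper uses the Cholesky factor $\bm{L}_{i_l}$ whereas you use the Hermitian square root $\bmQ_{i_r}^{1/2}$, a cosmetic difference. Where the two diverge is in how the whitened problem is closed. The paper invokes Proposition~1 of \cite{kim2011optimal}, i.e.\ a Hadamard-inequality argument: after a further eigen-rotation the $\mbox{ln det}$ is bounded above by the product of its diagonal entries, with equality exactly when $\bar{\bmV}^H\tilde{\bmC}\bar{\bmV}$ is diagonal, which forces the columns onto eigenvectors of $\tilde{\bmC}$ and leaves a separable per-mode problem. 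You instead establish coercivity, show every KKT point has range contained in a $\tilde{\bmC}$-invariant subspace, evaluate $f$ on each such point as $\sum_{k\in\mathcal S}\big(w_{i_l}\ln(w_{i_l}\sigma_k)-w_{i_l}+\sigma_k^{-1}\big)$, and pick $\mathcal S$ by monotonicity of the per-mode reward. Your route is self-contained and makes the rank constraint (the point you flag as the ``main obstacle'') explicit, whereas the paper's route is terser but offloads that step to the cited reference. The two arrive at the same generalized-eigenvector solution with the same diagonal loading $\boldsymbol\Lambda$, so nothing is lost or gained in substance; the small gap in your derivation of $\mathbf{W}\mathbf{W}^H=w_{i_l}\bmI-\bmSigma_{\mathcal S}^{-1}$ is filled by first right-rotating $\bar{\bmV}$ so that $\bar{\bmV}^H\tilde{\bmC}\bar{\bmV}$ is diagonal, which leaves $f$ unchanged.
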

\begin{proof}
 Given the analog part fixed, we proof the result only for $\bmV_{i_l}$. The result for $\bmV_{i_r}$ is then straightforward based on the proof for $\bmV_{i_l}$.
The proof relies on simplifying \eqref{WSR_concave} with fixed analog part until the
Hadamard’s inequality applies. The Cholesky
decomposition of the matrix $\bmG_{i_l}^H \big( \hat{\bmA}_{i_l} + \hat{\bmB}_{i_l}  +  \lambda_{i_l} \bmI\big)\bmG_{i_l})$ is written as
$\bm{L}_{i_l} \bm{L}_{i_l}^H$ where $\bm{L}_{i_l}$ is a lower-triangular Cholesky factor.
Similarly we do Cholesky
decomposition also for $\bmG_{i_l}^H \bmH_{i_r,i_l}^H \bmF_{i_r}^H \bmR_{\overline{i_r}}^{-1}\bmF_{i_r} \bmH_{i_r,i_l} \bmG_{i_l} $. Given the Cholesky
decomposition, we can apply the result provided in Proposition 1 \cite{kim2011optimal}, with the matrices $\bmG_{i_l}^H \big( \hat{\bmA}_{i_l} + \hat{\bmB}_{i_l}  +  \lambda_{i_l} \bmI\big)\bmG_{i_l})$ and $\bmG_{i_l}^H \bmH_{i_r,i_l}^H \bmF_{i_r}^H \bmR_{\overline{i_r}}^{-1}\bmF_{i_r} \bmH_{i_r,i_l} \bmG_{i_l} $. It follows immediately that the digital beamformer $\bmV_{i_l}$ is given as a dominant generalized eigenvectors of these two matrices. The same reasoning follows also for the digital beamformer $\bmV_{i_r}$. 
\end{proof}
Note that the GEV solution provides the optimal beamforming directions but not the optimal power allocation. Therefore, to include the optimal power allocation, we normalize the columns of the digital beamformers to be unit norm.

\subsection{Analog beamformer}
To optimize the analog beamformers for node $i_l \in \mathcal{L}$ and $i_r \in \mathcal{R}$, we take the derivative of \eqref{WSR_concave} for $\bmG_{i_l}$ and $\bmG_{i_r}$, which leads to the following KKT conditions
\begin{subequations}
 \begin{equation}
 \begin{aligned}
 &\bmH_{i_r,i_l}^H \bmF_{i_r}^H  \bmR_{\overline{i_r}}^{-1}\bmF_{i_r} \bmH_{i_r,i_l}  \bmG_{i_l} \bmV_{i_l} \bmV_{i_l}^H  
     \big(\bmI + \bmV_{i_l}^H \bmG_{i_l}^H \bmH_{i_r,i_l}^H \bmF_{i_r}^H \\& \bmR_{\overline{i_r}}^{-1}\bmF_{i_r} \bmH_{i_r,i_l}  \bmG_{i_l}  \bmV_{i_l}\big)^{-1}  =  
      \big( \hat{\bmA}_{i_l} + \hat{\bmB}_{i_l}  +  \lambda_{i_l} \bmI\big)\bmG_{i_l} \bmV_{i_l} \bmV_{i_l}^H,
 \end{aligned}
 \end{equation}
  \begin{equation}
 \begin{aligned}
 & \bmH_{i_l,i_r}^H \bmF_{i_l}^H  \bmR_{\overline{i_l}}^{-1}\bmF_{i_l} \bmH_{i_l,i_r} \bmG_{i_r} \bmV_{i_r} \bmV_{i_r}^H
     \big(\bmI + \bmV_{i_r}^H \bmG_{i_r}^H \bmH_{i_l,i_r}^H \bmF_{i_l}^H \\& \bmR_{\overline{i_l}}^{-1}\bmF_{i_l} \bmH_{i_l,i_r}  \bmG_{i_r}  \bmV_{i_r}\big)^{-1}  =  
      \big( \hat{\bmA}_{i_r} + \hat{\bmB}_{i_r}  +  \lambda_{i_r} \bmI\big)\bmG_{i_r} \bmV_{i_r} \bmV_{i_r}^H.
 \end{aligned}
 \end{equation}
\end{subequations}
To optimize the analog beamformer, the KKT conditions are not resolveable for $\bmG_{i_l}$ and $\bmG_{i_r}$. To do so, we apply the identity $\mbox{vec}(\bmA \bmX \bmB) = (\bmB^T \otimes \bmA) \mbox{vec}(\bmX)$, which shapes the the KKT conditions as

\begin{subequations}
 \begin{equation}
\begin{aligned}
    & [( \bmV_{i_r} \bmV_{i_r}^H
     \big(\bmI + \bmV_{i_r}^H \bmG_{i_r}^H \bmH_{i_l,i_r}^H \bmF_{i_l}^H  \bmR_{\overline{i_l}}^{-1}\bmF_{i_l} \bmH_{i_l,i_r}  \bmG_{i_r}  \bmV_{i_r}\big)^{-1})^T \\& \otimes \bmH_{i_l,i_r}^H \bmF_{i_l}^H  \bmR_{\overline{i_l}}^{-1}\bmF_{i_l} \bmH_{i_l,i_r}] \mbox{vec}(\bmG_{i_r}) \\& = [(\bmV_{i_r} \bmV_{i_r}^H)^T \otimes \big( \hat{\bmA}_{i_r} + \hat{\bmB}_{i_r}  +  \lambda_{i_l} \bmI\big)] \mbox{vec}(\bmG_{i_r}),
\end{aligned}
 \end{equation}
  \begin{equation}
  \begin{aligned}
   & [( \bmV_{i_l} \bmV_{i_l}^H
     \big(\bmI + \bmV_{i_l}^H \bmG_{i_l}^H \bmH_{i_r,i_l}^H \bmF_{i_r}^H  \bmR_{\overline{i_r}}^{-1}\bmF_{i_r} \bmH_{i_r,i_l}  \bmG_{i_l}  \bmV_{i_l}\big)^{-1})^T \\& \otimes \bmH_{i_r,i_l}^H \bmF_{i_r}^H  \bmR_{\overline{i_r}}^{-1}\bmF_{i_r} \bmH_{i_r,i_l} ] \mbox{vec}(\bmG_{i_l}) \\& = [(\bmV_{i_l} \bmV_{i_l}^H)^T \otimes \big( \hat{\bmA}_{i_l} + \hat{\bmB}_{i_l}  +  \lambda_{i_r} \bmI\big)] \mbox{vec}(\bmG_{i_l}).
  \end{aligned}
 \end{equation}
\end{subequations}
Given the vectorized analog beamformer, by following a similar proof as for the digital beamformers in the Theorem \eqref{teorema1}, it can be easily seen that the unconstrained vectorized analog combiners $\mbox{vec}(\bmG_{i_r})$ and $\mbox{vec}(\bmG_{i_l})$ are given by
\begin{subequations}
\begin{equation} \label{analog_Gir}
\begin{aligned} 
    \mbox{vec}(\bmG_{i_r}) = & \bmD_1(( \bmV_{i_r} \bmV_{i_r}^H
     \big(\bmI + \bmV_{i_r}^H \bmG_{i_r}^H \bmH_{i_l,i_r}^H \bmF_{i_l}^H  \bmR_{\overline{i_l}}^{-1}\bmF_{i_l} \\& \bmH_{i_l,i_r}  \bmG_{i_r}  \bmV_{i_r}\big)^{-1})^T  \otimes \bmH_{i_l,i_r}^H \bmF_{i_l}^H  \bmR_{\overline{i_l}}^{-1}\bmF_{i_l} \bmH_{i_l,i_r} , \\& (\bmV_{i_r} \bmV_{i_r}^H)^T \otimes \big( \hat{\bmA}_{i_l} + \hat{\bmB}_{i_l}  +  \lambda_{i_r} \bmI\big)),
\end{aligned}
\end{equation}
\begin{equation} \label{analog_Gil}
\begin{aligned} 
    \mbox{vec}(\bmG_{i_l}) = & \bmD_1(( \bmV_{i_l} \bmV_{i_l}^H
     \big(\bmI + \bmV_{i_l}^H \bmG_{i_l}^H \bmH_{i_r,i_l}^H \bmF_{i_r}^H  \bmR_{\overline{i_r}}^{-1}\bmF_{i_r} \\& \bmH_{i_r,i_l}  \bmG_{i_l}  \bmV_{i_l}\big)^{-1})^T  \otimes \bmH_{i_r,i_l}^H \bmF_{i_r}^H  \bmR_{\overline{i_r}}^{-1}\bmF_{i_r} \bmH_{i_r,i_l} , \\& (\bmV_{i_r} \bmV_{i_r}^H)^T \otimes \big( \hat{\bmA}_{i_r} + \hat{\bmB}_{i_r}  +  \lambda_{i_l} \bmI\big)),
\end{aligned}
\end{equation}
\end{subequations}
which needs to be reshaped into correct dimensions with the operation $\mbox{unvec}(\cdot)$. Furthermore, to meet the unit modulus constraint, we apply the operation $\angle \bmG_{i_l}$ and $\angle \bmG_{i_r}$, which preserves only the phase part.

\subsection{Analog combiner}
To optimize the analog combiner, we first define the received covariance matrices at the antenna level as $\bmR_{i_l}^{ant}$ and $\bmR_{i_r}^{ant}$ (obtained from \eqref{covariance} by simply omitting the analog combiner). After the analog combining stage, we have the following expression for the covariance matrices $\bmR_{i_l} = \bmF_{i_l} \bmR_{i_l}^{ant} \bmF_{i_l}^H $ and $\bmR_{i_r} = \bmF_{i_r} \bmR_{i_r}^{ant} \bmF_{i_r}^H $. Note that in the minorization-maximization approach, we linearize with respect to the WRs for which each transmit covariance matrix generate interference. However, as the combiner do not generate any interference and does not have the sum-power constraint,  we can directly solve \eqref{WSR}, which result to be concave for the analog combiners. Namely, we can first write

\begin{equation} \label{WSR_restated}
\begin{aligned}
   \underset{\substack{\bmF}}{\text{max}}  &\sum_{i_l \in \mathcal{L}} w_{i_l} [\mbox{ln det}( \bmF_{i_l} \bmR_{i_l}^{ant} \bmF_{i_l}^H )  -  \mbox{ln det}( \bmF_{i_l} \bmR_{\overline{i_l}}^{ant} \bm{F}_{i_l}^H  )]
    \\& + \sum_{{i_r} \in \mathcal{R}} w_{i_r}[ \mbox{ln det}(\bm{F}_{i_r} \bmR_{i_r}^{ant} \bmF_{i_r}^H) - \mbox{ln det}(\bmF_{i_l} \bmR_{\overline{i_l}}^{ant} \bmF_{i_l}^H )] 
\end{aligned}.
\end{equation}
in which both the terms are purely concace, in contrast to \eqref{WSR_concave} in which the trace terms was only linear. To optimize the analog combiners, we take the derivative of \eqref{WSR_restated} with respect to $\bmF_{i_l}$
and $\bmF_{i_r}$, which leads to the following KKT conditions

\begin{subequations}
\begin{equation}
\begin{aligned}
 w_{i_l} & \bmR_{i_l}^{ant} \bmF_{i_l}^H  \big( \bm{F}_{i_l} \bmR_{i_l}^{ant} \bmF_{i_l}^H \big)^{-1}  \\& - w_{i_l} \bmR_{\overline{i_l}}^{ant} \bmF_{i_l}^H  \big( \bmF_{i_l} \bmR_{\overline{i_l}}^{ant} \bm{F}_{i_l}^H\big)^{-1} =0, 
\end{aligned}
\end{equation}
\begin{equation}
\begin{aligned}
 w_{i_r} & \bmR_{i_r}^{ant} \bm{F}_{i_r}^H  \big( \bm{F}_{i_r} \bmR_{i_r}^{ant} \bm{F}_{i_r}^H \big)^{-1}  \\& - w_{i_r} \bmR_{\overline{i_r}}^{ant} \bm{F}_{i_r}^H  \big( \bm{F}_{i_r} \bmR_{\overline{i_r}}^{ant} \bm{F}_{i_r}^H\big)^{-1} =0.
\end{aligned}
\end{equation}
\end{subequations}
Given the structure of the KKT conditions, it is immediate to see that the analog combiner can be optimized as a dominant generalized eigenvector solution of

\begin{equation} \label{comb_Fil}
    \bm{F}_{i_l} = \bmD_{N_{i_l}^r}(\bmR_{i_l}^{ant},\bmR_{\overline{i_l}}^{ant}), \quad \bm{F}_{i_r} = \bmD_{N_{i_r}^r}(\bmR_{i_r}^{ant},\bmR_{\overline{i_r}}^{ant}),
\end{equation}
from which we select $M_{i_l}^r$ and $M_{i_r}^r$ rows, respectively. Given the optimal analog combiner, operation $\angle\bm{F}_{i_l}$ and $\angle\bm{F}_{i_r}$ is required to meet the unit modulus constraint.

\subsection{Optimal power allocation}
Given the optimal beamformers, in this section we present a novel power allocation scheme for the massive MIMO FD interference channel.

Let $\bmP_{i_l}$ and $\bmP_{i_r}$ denote the power allocation matrix for the node $i_l \in \mathcal{L}$ and $i_r \in \mathcal{R}$, respectively. Let  $\bmSigma_{i_l}^{(1)}$,$\bmSigma_{i_l}^{(2)}$,$\bmSigma_{i_r}^{(1)}$ and $\bmSigma_{i_r}^{(2)}$ be defined as
\begin{subequations} \label{sigmas}
\begin{equation}
    \bmSigma_{i_l}^{(1)} = \bmV_{i_r}^H \bmG_{i_r}^H \bmH_{i_l,i_r}^H \bmF_{i_l}^H \bmR_{\overline{i_l}}^{-1}\bmF_{i_l} \bmH_{i_l,i_r} \bmG_{i_r}  \bmV_{i_r},
\end{equation}
\begin{equation}
    \bmSigma_{i_l}^{(2)} =  \bmV_{i_r}^H \bmG_{i_r}^H (\hat{\bmA}_{i_l} + \hat{\bmB}_{i_l} + \lambda_{i_r} \bmI) \bmG_{i_r}  \bmV_{i_r},
\end{equation}
\begin{equation}
    \bmSigma_{i_r}^{(1)} =  \bmV_{i_l}^H \bmG_{i_l}^H \bmH_{i_r,i_l}^H \bmF_{i_r}^H \bmR_{\overline{i_r}}^{-1}\bmF_{i_r} \bmH_{i_r,i_l} \bmG_{i_l}  \bmV_{i_l},
\end{equation}
\begin{equation}
    \bmSigma_{i_r}^{(2)} = \bmV_{i_l}^H \bm{G}_{i_l}^H  (\hat{\bmC}_j + \hat{\bmD}_j +  \lambda_{i_l} \bmI) \bmG_{i_l}  \bmV_{i_l}.
\end{equation}
\end{subequations}
Given \eqref{sigmas}, the problem \eqref{WSR_concave}
with respect to the power matrices can be stated as 
\begin{subequations}
\begin{equation} \label{power}
\begin{aligned}
& \underset{\substack{\bmP_{i_l}}}{\text{max}}  \sum_{i_l \in \mathcal{L}} w_{i_l} \mbox{ln det} (\bm{I} + \bmSigma_{i_l}^{(1)} \bmP_{i_l} )  - \mbox{Tr}( \bmSigma_{i_l}^{(2)} \bmP_{i_l}), \\& \underset{\substack{\bmP_{i_l}}}{\text{max}} \sum_{i_r \in \mathcal{R}} w_{i_r} \mbox{ln det} (\bmI + \bmSigma_{i_r}^{(1)} \bmP_{i_r} ) 
           - \mbox{Tr}( \bmSigma_{i_r}^{(2)} \bmP_{i_r} ).
\end{aligned}
\end{equation}
\end{subequations}

To include the optimal power allocation, we take the derivative of \eqref{power} for the power matrices $\bmP_{i_l}$ and $\bmP_{i_r}$, and solving  KKT conditions for the powers leads to the following optimal power allocation matrices

\begin{equation} \label{pow_il}
    \bmP_{i_l} = ( w_{i_l} {\bmSigma_{i_l}^{(2)}}^{-1} \hspace{-2mm} - {\bmSigma_{i_l}^{(1)}}^{-1} )^{+}, \quad
  \bmP_{i_r} = ( w_{i_r} {\bmSigma_{i_r}^{(2)}}^{-1} \hspace{-2mm} - {\bmSigma_{i_r}^{(1)}}^{-1} )^{+}.
\end{equation}
where $(x)^+ = \mbox{max}\{0,x\}$. To meet the sum-power constraint, we search the optimal multipliers satisfying the power constraints while updating the powers with \eqref{pow_il}.
The multipliers $\lambda_{i_l}$ and $\lambda_{i_r}$ should be such that the Lagrange dual function \eqref{WSR_concave} is finite and the values of the multipliers should be strictly positive. Let $\mathbf{\Lambda}$ and $\bmP$ denote the collection of multipliers and powers. Formally, the multipliers' search problem can be stated as
\begin{equation} \label{dual_func}
\begin{aligned}
  \underset{\mathbf{\Lambda}}{\text{min}}\,\, \underset{\bm{P}}{\text{max}}   \quad  \mathbb{L}\big(\mathbf{\Lambda},\bm{P}\big), 
  \quad \quad \mbox{s.t.}  \quad \mathbf{\Lambda} \succeq 0.
\end{aligned}
\end{equation}
The dual function $\underset{\bm{P}}{\text{max}} \;\mathbb{L}(\mathbf{\Lambda},\bm{P})$ is the pointwise supremum of a family of functions of $\mathbf{\Lambda}$, it is convex \cite{boyd2004convex}  and the
globally optimal values for $\mathbf{\Lambda}$ can be found by using any of the numerous convex optimization techniques. In this work,we adopt the bisection method. Let $\underline{\lambda_{i_a}}$ and $\overline{\lambda_{i_a}}$, for $i_a \in \mathcal{L} $ or   $i_a \in \mathcal{R} $, denote the upper and lower bound for the Lagrange multiplier search for node $i_a$. Let $[0, \overline{\lambda_{i_a}^{max}}]$ denote the maximum search interval for the multipliers for node $i_a$.
The WSR maximization hybrid beamforming design for the massive MIMO interference channel is given in Algorithm \ref{algo1}

\begin{algorithm}[!t]
\caption{Hybrid Beamforming and Combining}
\textbf{Given:} $\mbox{The CSI and rate weights.}$\\
\textbf{Initialize:} All the beamformers and combiners.\\
\textbf{Repeat until convergence} 
%\fontsize{8}{12}
\begin{algorithmic}
\STATE \hspace{0.001cm}  for $\forall\; i_l \in \mathcal{L}$ and $\forall\; i_r \in \mathcal{R}$
\STATE \hspace{0.4cm} Compute $\bmA_{i_l}$ and  $\bmA_{i_l}$ with \eqref{A_il} and \eqref{A_ir}
\STATE \hspace{0.4cm} Compute $\bmB_{i_l}$ and  $\bmB_{i_r}$ with \eqref{B_il} and \eqref{B_ir}
\STATE \hspace{0.4cm} Compute $\bm{G}_{i_l}$ with
\eqref{analog_Gil}, do $\mbox{unvec}(\bm{G}_{i_l})$
and $\angle\bm{G}_{i_l}$
\STATE \hspace{0.4cm} Compute $\bm{G}_{i_r}$ with
\eqref{analog_Gir}, do $\mbox{unvec}(\bm{G}_{i_r})$
and $\angle\bm{G}_{i_r}$
\STATE \hspace{0.4cm} Compute $\bm{F}_{i_l}$, $\bm{F}_{i_r}$ with \eqref{comb_Fil} and do $\angle\bm{F}_{i_l}$, $\angle\bm{F}_{i_r}$
\STATE \hspace{0.4cm} Compute $\bm{V}_{i_l}$ and $\bm{V}_{i_r}$ with \eqref{digital_Vil} and \eqref{digital_Vir}
\STATE \hspace{0.4cm} Set $\underline{ \lambda_{i_l}} = 0$ and $\overline{\lambda_{i_l}} =\mu_{i_{max}}$ $\forall i_l \in \mathcal{L}_l$
\STATE \hspace{0.4cm} \textbf{Repeat until convergence} $\forall i_l \in \mathcal{L}$
\STATE \hspace{0.8cm} set $\lambda_{i_l} = (\underline{\lambda_{i_l}} + \overline{\lambda_{i_l}})/2$
\STATE \hspace{0.8cm} Compute $\bm{P}_{i_l}$ with \eqref{pow_il}, 
\STATE \hspace{0.8cm} \textbf{If} constraint for $\lambda_{i_l}$ is violated,
\STATE \hspace{1.1cm} set $\underline{\lambda_{i_l}} = \lambda_{i_l}$, \textbf{else} $\overline{\lambda_{i_l}} = \lambda_{i_l}$,
\STATE \hspace{0.4cm} \textbf{Repeat until convergence} $\forall i_r \in \mathcal{R}$
\STATE \hspace{0.8cm} set $\lambda_{i_r} = (\underline{\lambda_{i_r}} + \overline{\lambda_{i_r}})/2$. 
\STATE \hspace{0.8cm} Compute $\bm{P}_{i_l}$ with \eqref{pow_il}, 
\STATE \hspace{0.8cm} \textbf{If} constraint for $\lambda_{i_r}$ is violated,
\STATE \hspace{1.1cm} set $\underline{\lambda_{i_r}} = \lambda_{i_r}$, \textbf{else} $\overline{\lambda_{i_r}} = \lambda_{i_r}$,
\end{algorithmic}
\label{algo1}
\end{algorithm}
\subsection{Short Convergence Proof}
%In our context, the ingredients required to prove the convergence are minorization \cite{StoicaSelen:SPmag0104}, alternating or cyclic optimization \cite{StoicaSelen:SPmag0104} (also called block coordinate descent), Lagrange dual function \cite{BoydCambridgePress2004}, saddle-point interpretation \cite{BoydCambridgePress2004} and KKT conditions \cite{BoydCambridgePress2004}.
For the WSR cost function \eqref{WSR}, we construct its minorizer as in \eqref{WSR_concave}, which restates the WSR maximization as a concave problem, when the remaning variables are fixed (in the gradients. The minorizer results to be a touching lower bound for the original WSR problem, therefore we can write  
\begin{equation}
\begin{aligned}
        \underline{WSR} &= \sum_{i_l \in \mathcal{L}} w_{i_l} \mbox{ln det} (\bm{I} + \bmSigma_{i_l}^{(1)} \bmP_{i_l} ) - \mbox{Tr}( \bmSigma_{i_l}^{(2)} \bmP_{i_l}), \\& + \sum_{i_r \in \mathcal{R}} w_{i_r} \mbox{ln det} (\bmI + \bmSigma_{i_r}^{(1)} \bmP_{i_r} )
           - \mbox{Tr}( \bmSigma_{i_r}^{(2)} \bmP_{i_r} )
\end{aligned}
\end{equation}
The minorizer, which is concave in transmit covariance matrices, still has the same gradient of the original WSR and hence the KKT conditions are not affected. Now reparameterizing the transmit covariance matrices in terms of beamformer with the optimal power matrices and adding the power constraints to the minorizer, we get the Lagrangian \eqref{WSR_concave}. By updating all the beamformers and combiners as a dominant generalized  eigenvector, leads to an increase in the WSR \cite{kim2011optimal} at each iteration, thus assuring convergence.

\section{SIMULATION RESULTS}
This section presents the simulation results for the proposed HYBF and combining design for WSR maximization in a mmWave FD massive MIMO interference channel.

We consider a scenario of $2$ point-to-point communication links consisting of $4$ nodes operating in the mmWave. We assume that all the nodes are equipped with uniform linear arrays, with transmit antennas placed at half-wavelength. The angle of arrival $\phi_{i_l}^{n_p,n_c}$ and angle of departure $\theta_{i_l}^{n_p,n_c}$ are chosen to be uniformly distributed in the interval $[-20^{\circ},20^{\circ}]$. The SI channel is modelled with the Rician factor $\kappa_{i_l}= 10^{5}$ dB and the $r_{m,n}$ is modelled as in \cite{satyanarayana2018hybrid}, with distance between the transmit and receive array of $20$ cm and with a relative angle of $90^{\circ}$. The number of clusters and the number of paths is set to be $N_c = 3$ and $N_p = 6$. We define the transmit SNR at node $i_l$ as $\mbox{SNR}_{i_l} = p_{i_l}/\sigma_{i_l}^2$ and assume it to be the same for all the nodes, denoted as SNR.
We label the proposed hybrid beamforming scheme as HYBF, and for comparison purposes, we define the following benchmark schemes.
\begin{itemize}
    \item \emph{Fully Digital FD} - with all the FD nodes having number of RF chains equal to the number of antennas.
    \item \emph{Fully Digital HD} - with all the nodes having number of RF chains equal to the number of antennas but operating in half-duplex (HD) mode.
\end{itemize}
 
 Figure \ref{fig:my_label1} shows the achieved WSR as a function of SNR with a different number of RF chains in comparison with the benchmark schemes. It can be clearly seen that the proposed hybrid beamforming and combining has the potential to perform very close to the fully digital FD beamforming design with $32$ RF chains. Still the proposed algorithm exhibits some gap compared to the fully digital solution as the analog beamforming stage is constrained and must meet the unit modulus constraint. We can see that also with $16$RF chains, the proposed algorithm can considerably outperform the fully digital HD scheme, with operates with $100$ transmit and receive RF chains. Figure \ref{fig:my_label2} shows the average WSR as a function of SNR with $64$~transmit and receive antennas. It can be seen that with the same number of RF chains ($32$) and less number of transmit and receive antennas, the performance of the proposed HYB algorithm gets strictly close to the fully digital FD scheme.

  \begin{figure}
      \centering
      \includegraphics[width=8.5cm,height=7cm]{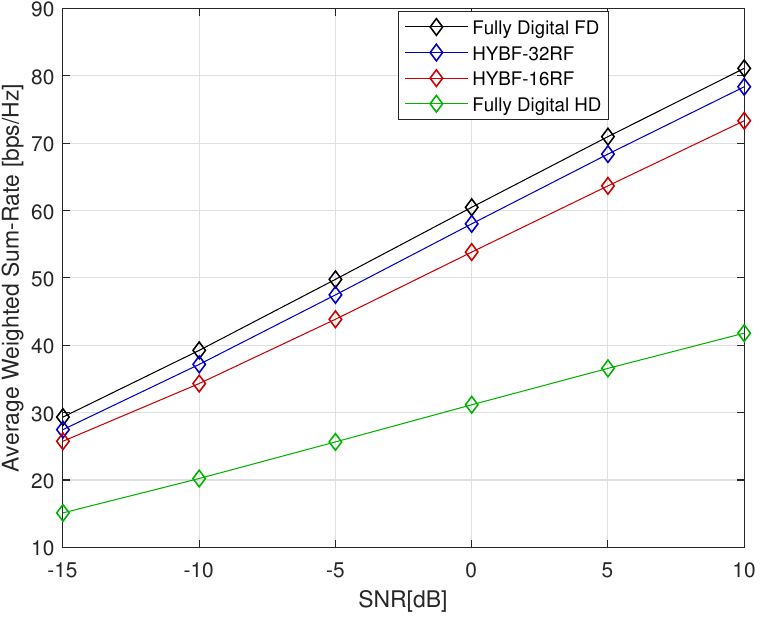}
      \caption{Average WSR as a function of SNR with transmit and receive antennas $N_{a}^t = N_{a}^r =100$ and RF chains $M_{a}^t = M_{a}^r = 32$ or $16$, $\forall a \in \mathcal{L}$ or $\mathcal{R}$.}
      \label{fig:my_label1}
  \end{figure}
  
  \begin{figure}
      \centering
      \includegraphics[width=8.5cm,height=7cm]{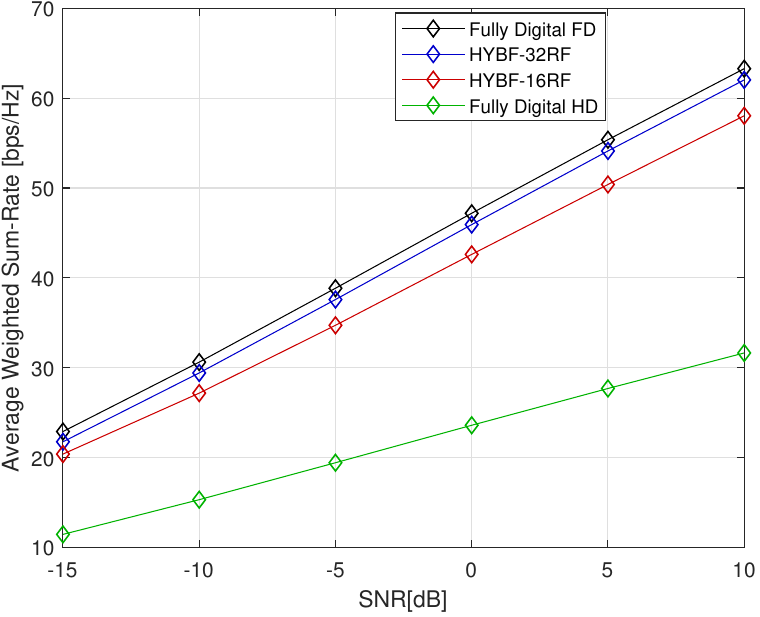}
      \caption{Average WSR as a function of SNR with transmit and receive antennas $N_{a}^t = N_{a}^r =64$ and RF chains $M_{a}^t = M_{a}^r = 32$ or $16$, $\forall a \in \mathcal{L}$ or $\mathcal{R}$.}
      \label{fig:my_label2}
  \end{figure}

\section{CONCLUSIONS}

In this paper, we studied the problem of WSR maximization in a mmWave massive MIMO FD interference channel consisting of $K$ point-to-point communication links. The simultaneous coexistence of multiple FD nodes leads to an extremely challenging communication scenario for which a novel hybrid beamforming and combining scheme is proposed. Simulation results show that the proposed design can perform extremely close to the fully digital FD beamforming design operating with $100$ antennas, with only $32$ RF chains. Moreover, the proposed design significantly outperforms the fully digital HD system with only $16$ RF chains.

\bibliographystyle{IEEEtran}
\bibliography{Globcomm_2021.bib}

% Generated by IEEEtran.bst, version: 1.14 (2015/08/26)
\begin{thebibliography}{10}
\providecommand{\url}[1]{#1}
\csname url@samestyle\endcsname
\providecommand{\newblock}{\relax}
\providecommand{\bibinfo}[2]{#2}
\providecommand{\BIBentrySTDinterwordspacing}{\spaceskip=0pt\relax}
\providecommand{\BIBentryALTinterwordstretchfactor}{4}
\providecommand{\BIBentryALTinterwordspacing}{\spaceskip=\fontdimen2\font plus
\BIBentryALTinterwordstretchfactor\fontdimen3\font minus
  \fontdimen4\font\relax}
\providecommand{\BIBforeignlanguage}[2]{{%
\expandafter\ifx\csname l@#1\endcsname\relax
\typeout{** WARNING: IEEEtran.bst: No hyphenation pattern has been}%
\typeout{** loaded for the language `#1'. Using the pattern for}%
\typeout{** the default language instead.}%
\else
\language=\csname l@#1\endcsname
\fi
#2}}
\providecommand{\BIBdecl}{\relax}
\BIBdecl

\bibitem{rosson2019towards}
P.~Rosson, C.~K. Sheemar, N.~Valecha, and D.~Slock, ``Towards massive mimo
  in-band full duplex radio,'' in \emph{ISWCS}.\hskip 1em plus 0.5em minus
  0.4em\relax IEEE, 2019.

\bibitem{sheemar2020receiver}
C.~K. Sheemar and D.~Slock, ``Receiver design and agc optimization with self
  interference induced saturation,'' in \emph{ICASSP}.\hskip 1em plus 0.5em
  minus 0.4em\relax IEEE, 2020.

\bibitem{sheemar2021PAPC}
C.~K. Sheemar and S.~Dirk, ``Beamforming for bidirectional mimo full duplex
  under the joint sum power and per antenna power constraints.''

\bibitem{satyanarayana2018hybrid}
K.~Satyanarayana, M.~El-Hajjar, P.-H. Kuo, A.~Mourad, and L.~Hanzo, ``Hybrid
  beamforming design for full-duplex millimeter wave communication,''
  \emph{IEEE Transactions on Vehicular Technology}, vol.~68, no.~2, pp.
  1394--1404, 2018.

\bibitem{han2019full}
S.~Han, Y.~Zhang, W.~Meng, C.~Li, and Z.~Zhang, ``Full-duplex relay-assisted
  macrocell with millimeter wave backhauls: Framework and prospects,''
  \emph{IEEE Network}, vol.~33, no.~5, pp. 190--197, 2019.

\bibitem{huang2020learning}
S.~Huang, Y.~Ye, and M.~Xiao, ``Learning based hybrid beamforming design for
  full-duplex millimeter wave systems,'' \emph{arXiv preprint
  arXiv:2004.08285}, 2020.

\bibitem{sheemar2021hybrid}
C.~K. Sheemar and D.~Slock, ``Hybrid beamforming for bidirectional massive mimo
  full duplex under practical considerations,'' in \emph{VTC Spring}, 2021.

\bibitem{thomas2019multi}
C.~K. Thomas, C.~K. Sheemar, and D.~Slock, ``Multi-stage/hybrid bf under
  limited dynamic range for ofdm fd backhaul with mimo si nulling,'' in
  \emph{ISWCS}.\hskip 1em plus 0.5em minus 0.4em\relax IEEE, 2019.

\bibitem{cai2020two}
Y.~Cai, K.~Xu, A.~Liu, M.~Zhao, B.~Champagne, and L.~Hanzo, ``Two-timescale
  hybrid analog-digital beamforming for mmwave full-duplex mimo multiple-relay
  aided systems,'' \emph{IEEE Journal on Selected Areas in Communications},
  2020.

\bibitem{roberts2020hybrid}
I.~P. Roberts, J.~G. Andrews, and S.~Vishwanath, ``Hybrid beamforming for
  millimeter wave full-duplex under limited receive dynamic range,''
  \emph{arXiv preprint arXiv:2012.11647}, 2020.

\bibitem{sheemar2021practical}
C.~K. Sheemar and C.~K. T.~D. Slock, ``Practical hybrid beamforming for
  millimeter wave massive mimo full duplex with limited dynamic range,''
  \emph{arXiv preprint arXiv:2104.11537}, 2021.

\bibitem{sheemar2021massive}
C.~K. Sheemar and D.~Slock, ``Massive mimo mmwave full duplex relay for iab
  with limited dynamic range,'' in \emph{NTMS 2021}.\hskip 1em plus 0.5em minus
  0.4em\relax IEEE, 2021.

\bibitem{stoica2004cyclic}
P.~Stoica and Y.~Selen, ``Cyclic minimizers, majorization techniques, and the
  expectation-maximization algorithm: a refresher,'' \emph{IEEE Signal
  Processing Magazine}, vol.~21, no.~1, pp. 112--114, 2004.

\bibitem{christensen2008weighted}
S.~S. Christensen, R.~Agarwal, E.~De~Carvalho, and J.~M. Cioffi, ``Weighted
  sum-rate maximization using weighted mmse for mimo-bc beamforming design,''
  \emph{IEEE Transactions on Wireless Communications}, 2008.

\bibitem{kim2011optimal}
S.-J. Kim and G.~B. Giannakis, ``Optimal resource allocation for mimo ad hoc
  cognitive radio networks,'' \emph{IEEE Transactions on Information Theory},
  vol.~57, no.~5, pp. 3117--3131, 2011.

\bibitem{boyd2004convex}
S.~Boyd, S.~P. Boyd, and L.~Vandenberghe, \emph{Convex optimization}.\hskip 1em
  plus 0.5em minus 0.4em\relax Cambridge university press, 2004.

\end{thebibliography}

\end{document}